\definecolor{ccccccc}{RGB}{204,204,204}
\definecolor{cffffff}{RGB}{255,255,255}
\definecolor{cff0000}{RGB}{255,0,0}
\definecolor{c0000ff}{RGB}{0,0,255}
\definecolor{c00ff00}{RGB}{0,255,0}
\newtheorem{theorem}{\bf Theorem}[section]
\newtheorem{lemma}[theorem]{\bf Lemma}
\newtheorem{proposition}[theorem]{\bf Proposition}
\newtheorem{remark}[theorem]{\bf Remark}
\newtheorem{definition}[theorem]{\bf Definition}
\newtheorem{example}[theorem]{\bf Example}
\newtheorem{corollary}[theorem]{\bf Corollary}
\newcommand{\av}[1]{\ensuremath{\mathcal{#1}}}
\newcommand{\vek}[1][h]{\ensuremath{\mathbf{#1}}}
\newcommand{\f}[1]{\mathbf{#1}}
\newcommand{\oo}[2][d]{{\cal O}_{#1}({\cal #2})}
\newcommand{\R}{\mathbb{R}}
\journal{Computer Aided Geometric Design}
\begin{document}

\sloppy

\begin{frontmatter}

\title{Smooth surface interpolation using patches with rational offsets}

\author[plzen1,plzen2]{Miroslav L\'{a}vi\v{c}ka\corref{cor1}}
\cortext[cor1]{Corresponding author}
\ead{lavicka@kma.zcu.cz}

\author[praha]{Zbyn\v{e}k \v{S}\'{i}r}
\ead{zbynek.sir@mff.cuni.cz}

\author[plzen1,plzen2]{Jan Vr\v{s}ek}
\ead{vrsekjan@kma.zcu.cz}

\address[plzen1]{Department of Mathematics, Faculty of Applied Sciences, University of West Bohemia,
         Univerzitn\'i~8,~306~14~Plze\v{n},~Czech~Republic}

\address[plzen2]{NTIS -- New Technologies for the Information Society, Faculty of Applied Sciences, University of West Bohemia, Univerzitn\'i 8, 306 14 Plze\v{n}, Czech~Republic}

\address[praha]{Mathematical Institute,  Charles University in Prague, Sokolovská 83, 186 75 Praha 8, Czech~Republic}

\begin{abstract}
We present a simple functional method for the interpolation of given data points and associated normals with surface parametric patches with rational normal fields. We give some arguments why a dual approach is especially convenient for these surfaces, which are traditionally called Pythagorean normal vector (PN) surfaces. Our construction is based on the isotropic model of the dual space to which the original data are pushed. Then the bicubic Coons patches are constructed in the isotropic space and then pulled back to the standard three dimensional space. As a result we obtain the patch construction which is completely local and produces surfaces with the global $G^1$~continuity.
\end{abstract}

\begin{keyword}
Hermite interpolation \sep surfaces with Pythagorean normal fields \sep rational offsets \sep isotropic model \sep Coons patches
\end{keyword}

\end{frontmatter}

\section{Introduction}\label{sec intro}

This paper is devoted to the Hermite interpolation with the surfaces possessing Pythagorean normal vector fields (PN surfaces). These surfaces were introduced by \cite{Po95}. We can understand them as surface counterparts to the Pythagorean hodograph (PH) curves first studied by \cite{FaSa90}. PN surfaces have rational offsets and thus provide an elegant solution to many offset-based problems occurring in various practical applications. In particular, in the context of the computer-aided manufacturing, the tool path does not have to be approximated and it can be described exactly in the NURBS form, which is nowadays a standard format of the CAD/CAM applications.

For the survey of the theory and applications of PH/PN objects, see \cite{Fa08} and references therein. Many interesting theoretical questions related to this subject have been studied in the past years. Let us mention in particular the analysis of the geometric and algebraic properties of the offsets, such as the determination of the number and type of their components and the construction of their suitable rational parameterizations \citep{ArSeSe97,ArSeSe99,Ma99,SeSe00,VrLa10}.

Despite natural similarities between the PH curves and the PN surfaces, the two classes of Pythagorean objects exhibit also some important differences. For example, the set of all polynomial PH curves within the set of all rational PH/PN curves was exactly identified in \citep{FaPo96}. On the other hand for the PN surfaces only the rational ones are described explicitly using a dual representation and the subset of the polynomial ones have not been revealed yet. Polynomial solution of the Pythagorean condition in the surface case started in \citep{LaVr11} for cubic parameterizations and recently an approach based on bivariate polynomials with quaternion coefficients was presented by \cite{KoKrVi16}.
A survey discussing rational surfaces with rational offsets and their modelling applications can be found in \citep{KrPe10}.

The previous problem is also strongly related to the construction techniques for PN surfaces, in particular to the Hermite interpolation, which is the main topic of this paper. There exist many Hermite interpolation results for the polynomial and rational PH curves yielding piecewise curves of various continuity, see \citep{Fa08,KoLa14}.
Concerning direct algorithms for the interpolations with PN surfaces the situation is different. By `direct' we mean in this context the construction of the object together with its PN parameterization. Indeed, some constructions for special surfaces, which become PN only after a suitable reparameterization, were designed.  For instance in \cite{BaJuKoLa08}, there was designed a method for the construction of the exact offsets of quadratic triangular B\'{e}zier surface patches, which are in fact PN surfaces. However their PN parameterizations were gained via certain reparameterization. A similar approach based on reparameterization was also used in the paper \citep{JuSa00} devoted to the surfaces with linear normals \citep{Ju98}, which generally admit a non-proper PN parameterizations, see \cite{VrLa14a} for more explanations.

First we emphasize that our method requires that given points being interpolated are arranged in a rectangular grid; for further details about quadrilateral mesh generation and processing, including surface analysis and mesh quality, simplification, adaptive refinement, etc. we refer to survey paper \citep{BoLePiPuSiTaZo13} and references therein. Next, unlike the approaches presented in the papers cited in the previous paragraph we plan to interpolate a set of given points $\f p_{ij}$ with the associated normal vectors $\f n_{ij}$ by a rational parameterized PN surface in a direct way, i.e., without the necessary subsequent reparameterization. The advantage of such direct PN interpolation techniques is obvious -- no complicated trimming procedure in the parameter space is necessary. As far as we are aware, a similar method was discussed only in \citep{PePo96}, where a surface design scheme with triangular patches on parabolic Dupin cyclides was proposed. In addition, in \citep{Gravesen2007} the interpolation of triangular data using the support function is studied. The Gauss image is first constructed and then the support function interpolating the values and gradients at certain points (the given normals) is determined. Our approach interpolates the normals and the support function simultaneously in the isotropic space. This way we are able to produce local patches with global $G^1$ continuity.

In the beginning of this paper, we also very shortly address the related open problem of the Hermite interpolation with the {\em polynomial} PN surfaces. Rather than solve this problem, we show its complexity. Indeed, by simply considering the required number of free parameters we show that this problem is much harder than in the curve case. We use this consideration as a certain defense for using a dual technique, which leads to rational solutions. Even so we consider the Hermite interpolation with polynomial PN surfaces directly as a promising and challenging direction for our future research.

The remainder of this paper is organized as follows. Section~$2$ recalls some basic facts concerning surfaces with Pythagorean normal vector fields. We will also briefly sketch how to satisfy the PN condition in the polynomial case, i.e., how to find polynomial PN parameterizations. In Section 3 the representation of PN surfaces in the Blaschke and in the isotropic model is presented. We also discuss the usefulness of these representations for the solution of the interpolation problem. Section 4 is devoted to bicubic Coons patches in the isotropic model and their usage in the construction of smooth PN surfaces. The method is described,  discussed and presented on a particular example in Section~5. Finally, we conclude the paper in Section~6.

\section{Surfaces with Pythagorean normals}\label{sec prelim}

In this section we recall some fundamental facts about surfaces with rational offsets.

\begin{definition}%
Let $\av{X}$ be a real algebraic surface in $\R^3$, let $\av{X}^r$ denote the set of regular points of $\av{X}$,
and let us denote by $\f n_{\f p}\in \av{S}^2$ a unit normal vector at a point
$\f p\in\av{X}^r$. Then the \emph{$d$-offset} $\oo{X}$  of $\cal X$  is defined as the closure of the set
$\{\f p\pm d\f n_{\f p}\mid\,  \f p\in\av{X}^r\}$.
\end{definition}

If $\cal X$ is rational and  $\f x:\R^2\rightarrow \R^3$ is its parameterization, we may write down a parameterization of the offset explicitly in the form
\begin{equation}\label{eq param offset}
  \f x (u,v)\pm d \f n_{\f x}(u,v),
\end{equation}
where $\f n_{\f x}(u,v)$ is the unit normal vector field associated to the parameterization $\f x (u,v)$. It turns out that \eqref{eq param
offset} is rational if and only if $\f n_{\f x}(u,v)$ is. This is equivalent to the existence of a rational function $\sigma(u,v)$ such
that
\begin{equation}\label{PNcondition}
\|\f{x}_u\times\f{x}_v\|^2=\sigma^2,
\end{equation}
where $\f{x}_u$ and $\f{x}_v$ are partial derivatives with respect to $u$ and $v$, respectively.

\begin{definition}%
$C^1$ regular parametric surfaces fulfilling condition \eqref{PNcondition} are called {\em surfaces with Pythagorean normal vector fields} (or {\em PN surfaces}, in short) and condition \eqref{PNcondition} is referred to as \emph{PN condition} or \emph{PN property}.
\end{definition}

\medskip
PN surfaces were defined by \cite{Po95} as surface analogies to Pythagorean hodograph (PH) curves distinguished by the PH condition  $\|\f{x}'(t)\|^2=\sigma(t)^2$. These curves were introduced as planar polynomial objects. Later, the concept was generalized also to the rational PH curves, see \citep{Po95}. The interplay between the different approaches to polynomial and rational curves with Pythagorean hodographs was studied by \cite{FaPo96} and the former were established as a proper subset of the latter by presenting simple algebraic constraints.

Unfortunately more than 20 years from their introduction, the situation is still completely different for the PN surfaces. This is also reflected when solving the interpolation problems, in which the points and the normal vectors are prescribed as input data. The most natural (and expected) way of handling the PN surfaces would be probably similar to the one used for PH curves, see e.g. \cite{Fa08}. Let us show it on the polynomial case. All the polynomials satisfying the PH condition $x'(t)^2+y'(t)^2=\sigma(t)^2$ can be described explicitly using polynomial Pythagorean triples. The corresponding PH curve $\f x(t)=(x(t),y(t))$ is then  obtained simply by integration. In the surface case, however, we cannot reproduce this approach.

It is possible to describe explicitly all the polynomial Pythagorean normal fields $\f N(u,v)$ of degree $k$ having the polynomial length, i.e., $||\f N(u,v)||^2$ is a perfect square; cf. \citep{DiHoJu93}. To determine an associated PN parameterization of degree $\ell+1$ in a direct way, we have to find suitable polynomial vector fields
\begin{equation}
\begin{array}{c}
\displaystyle
\f P(u,v)=
\left(
\sum_{i+j\leq\ell}
\mbox{\hspace*{-0ex}}p_{1ij}{u^iv^j},
\sum_{i+j\leq\ell}
\mbox{\hspace*{-0ex}}p_{2ij}{u^iv^j},
\sum_{i+j\leq\ell}
\mbox{\hspace*{-0ex}}p_{3ij}{u^iv^j}
\right),\\[4ex]
\displaystyle
\f Q(u,v)=
\left(
\sum_{i+j\leq\ell}
\mbox{\hspace*{-0ex}}q_{1ij}{u^iv^j},
\sum_{i+j\leq\ell}
\mbox{\hspace*{-0ex}}q_{2ij}{u^iv^j},
\sum_{i+j\leq\ell}
\mbox{\hspace*{-0ex}}q_{3ij}{u^iv^j}
\right),
\end{array}
\end{equation}
which will play the role of $\f x_u$, $\f x_v$, respectively, that satisfy the following conditions
\begin{equation}\label{eq PN soustava}
\begin{array}{rcl}
\f P \cdot \f N & = & 0,\\
\f Q \cdot \f N & = & 0,\\[1ex]
\displaystyle \frac{\partial\f P}{\partial v} - \displaystyle \frac{\partial\f Q}{\partial u} & = & 0,
\end{array}
\end{equation}
where the third equation expresses the condition for the~integrability.  Since a polynomial of degree $n$ in two variables possesses $\binom{n+2}{2}$ coefficients, the problem is now transformed to solving a system of $2\binom{k+\ell+2}{2}+3\binom{\ell+1}{2}$ homogeneous linear equations with $6\binom{\ell+2}{2}$ unknowns $p_{1ij}, p_{2ij}, p_{3ij},q_{1ij},q_{2ij},q_{3ij}$. The corresponding PN parameterization is then obtain as
\begin{equation}
\f x(u,v)=\int\f {P}(u,v)\,\mathrm{d}u+\f {c}(v),
\mbox{ where}
\qquad
\f{c}(v)=\left[\int \f{Q}(u,v)\,\mathrm{d}v-\int \f{P}(u,v)\,\mathrm{d}u\right]_{u=0}.
\end{equation}

However, we must stress that not for every given polynomial Pythagorean normal field $\f N(u,v)$ there exists a corresponding polynomial surface $\mathbf x(u,v)$ for which  $\mathbf x_u \times \mathbf x_v=\f N(u,v)$. For this to hold we need $\ell=k/2$. Nevertheless, in this case the number of unknowns is less than the number of equations so one cannot expect a solution, in general. On the other hand for $\ell$ large enough, the system of equations \eqref{eq PN soustava} is solvable. In this case we obviously arrive at a PN parameterization such that $\mathbf x_u \times \mathbf x_v=f(u,v)\bf N(u,v)$, where $f(u,v)$ is some non-constant polynomial.

\section{PN surfaces in the isotropic model of the dual space}\label{PN_isotrop}

As the offsets have a considerably simplier description if we apply the dual approach, we recall in this section the representation of PN surfaces in the Blaschke and isotropic model of the dual space. Moreover, this concept is later used for formulating our Hermite interpolation algorithm.

For the sake of brevity, we exclude developable surfaces from our considerations and assume a non-degenerated Gauss image $\gamma(\av{X})$ of all studied surfaces $\av{X}$  in what follows. This means that the {\em duality} $\delta$ maps a surface $\av{X}$ to its {\em dual surface $\av{X}^*$}. Recall that a non-developable surface ${\cal X}: f(\f x)=0$  has the {\em dual representation}
\begin{equation}\label{dual}
 {\cal X}^*:\ F^*(\vek[n],h)=0,
\end{equation}
where $F^*$ is a homogeneous polynomial in $\f n=(n_1,n_2,n_3)$ and $h$. If $F^*(\f n,h)=0$ then the
set of all planes
\begin{equation}\label{tangents}
T_{\f n,h}: \f n\cdot \f x=h
\end{equation}
forms a~system of {\em tangent planes} of $\cal X$ with the normal vectors $\f n$ (i.e., $\av{X}^*$ is considered as the set of tangent planes of $\av{X}$). Furthermore, if we assume $\|\f n\|=1$ then the value of $h$ is the oriented distance of the tangent plane to the origin. Moreover, if the partial derivative $\partial F^*/\partial h$ does not vanish at $(\f n_0,h_0)\in {\cal X}^*$, then \eqref{dual} implicitly defines a function
\begin{equation}
\f n\mapsto h(\f n)
\end{equation}
in a certain neighborhood of $(\f n_0,h_0)$. The restriction of this function to the unit sphere $\av{S}^{2}$ is called the {\em support function} of the primal surface, see \cite{Gravesen2007,AiJuGVSc09,GrJuSi08,LaBaSi10,SiGrJu08} for more details. Let us stress out that that the dual representation \eqref{dual} does not require the normal vectors $\f n$ to be unit vectors. However, whenever we use the support function then its argument $\f n$ will be assumed to be a unit vector.

Conversely, from any smooth real function on (a subset of) $\av{S}^2$ we can reconstruct the corresponding primal surface by the parameterization $\f x_h:\,\av{S}^2\rightarrow\mathbb{R}^2$ \begin{equation}\label{envelope}
\f x_h(\f{n})=h(\f{n})\f{n}+\nabla_{\!\av{S}^2}h(\f n),
\end{equation}
where the vector $\nabla_{\!\av{S}^2}h$ is obtained by embedding the intrinsic gradient of $h$ with respect to $\av{S}^2$ into the space $\R^3$, see \cite{GrJuSi08} for more details. The vector-valued function $\f x_h$ gives a parameterization of the envelope of the set of tangent planes \eqref{tangents}. Hence, all surfaces with the associated rational support function are
rational. It is enough to substitute into \eqref{envelope} any rational parameterization of $\av{S}^2$, for instance ${\f{n}(u,v)=(2u/(1+u^2+v^2),2v/(1+u^2+v^2),(1-u^2-v^2)/(1+u^2+v^2))}$.

Furthermore, several important geometric operations correspond to suitable modifications of the support function, see \cite{SiGrJu08}. In particular the one-sided offset of a surface at the distance $d$ is obtained by adding the constant $d$ to the support function $h$. For using the support function for computing the convolutions (i.e., the general offsets) of two hypersurfaces see e.g. \cite{SiGrJu07}.

\bigskip
In \cite{PoPe98}, the rational surfaces with rational offsets were studied in the so-called Blaschke model. Consider in ${\mathbb{R}^4}$ the quadric  $\av{B}=\av{S}^2\times\R:\,  \|\f n\|^2-1=0$. This quadratic cylinder is called the {\em Blaschke cylinder}. It holds that parallel tangent planes are then represented as points lying on the same generator (a line parallel to the $x_4$-axis) of $\av{B}$. In what follows the map that
sends a point in $\av{X}$ to the tangent plane, i.e., a point in $\av{X}^*$, is called the {\em Blaschke mapping} and is denoted $\beta$.

\begin{proposition}
Any non-developable PN surface is the image of a rational surface on the Blaschke cylinder $\mathcal{B}$ via the mapping $\phi=\delta^{-1}\circ\beta$.
\end{proposition}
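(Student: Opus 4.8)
The plan is to obtain the required rational surface on $\av B$ by pushing a rational parameterization of $\av X$ through the Blaschke mapping, and then to recover $\av X$ from it by biduality. So let $\f x(u,v)$ be a rational parameterization of the given non-developable PN surface $\av X$ (this is the setting in which the PN property was formulated above).

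First I would extract a rational unit normal field: by the PN property there is a rational function $\sigma(u,v)$ with $\|\f x_u\times\f x_v\|^2=\sigma^2$, so that
\[
\f n(u,v)=\frac{\f x_u(u,v)\times\f x_v(u,v)}{\sigma(u,v)}
\]
is a rational vector field with $\|\f n\|\equiv 1$; up to a sign it is the Gauss map of $\av X$, whose image $\gamma(\av X)$ is two-dimensional by the standing non-degeneracy assumption. Next I would record the signed distance of the tangent plane to the origin along the parameterization, $h(u,v)=\f n(u,v)\cdot\f x(u,v)$, which is again rational, being assembled from $\f n$ and $\f x$ by products and sums. Hence
\[
\f b\colon(u,v)\longmapsto\bigl(\f n(u,v),h(u,v)\bigr)
\]
is a rational map $\R^2\dashrightarrow\R^4$ whose image lies on the Blaschke cylinder, since $\|\f n\|^2-1\equiv 0$. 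Let $\av S\subset\av B$ be the Zariski closure of this image. As the first three coordinates of $\f b$ already parameterize the two-dimensional set $\gamma(\av X)$, the surface $\av S$ is genuinely two-dimensional, and being the closure of the image of a rational map it is a rational surface.

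It remains to verify that $\phi(\av S)=\av X$. By construction $\f b=\beta\circ\f x$ once $\av X^{*}$ is identified with its lift to $\av B$ in the usual way, so that $\av S$ projects onto the dual surface $\av X^{*}=\delta(\av X)$ in the dual space. For a non-developable surface the duality $\delta$ is a well-defined birational correspondence between $\av X$ and its two-dimensional dual $\av X^{*}$ satisfying $\delta^{-1}\circ\delta=\mathrm{id}$ (biduality); equivalently, reading off from a point $(\f n_0,h_0)\in\av S$ the tangent plane $\f n_0\cdot\f x=h_0$ and forming the envelope via~\eqref{envelope} returns precisely the point $\f x_h(\f n_0)$ of $\av X$ at which that plane is tangent. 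Consequently $\phi(\av S)=\delta^{-1}\bigl(\delta(\av X)\bigr)=\av X$, which is the assertion.

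I expect the main obstacle to be bookkeeping rather than genuine difficulty: one has to make explicit that non-developability is precisely what legitimizes the duality $\delta$ (and with it biduality, hence $\phi\circ\beta=\mathrm{id}$ on the regular locus), and one has to argue that passing freely between a parameterization and the surface it traces, and between a rational map and the closure of its image, is harmless here, so that the indeterminacy loci of $\f b$ and any lower-dimensional discrepancy between $\phi(\av S)$ and $\av X$ do not affect the statement at the level of the closures of these surfaces.
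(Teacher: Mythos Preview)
Your argument is correct and is precisely the standard one. Note, however, that the paper does not actually supply a proof of this proposition: it is stated as a known fact (in the spirit of \cite{PoPe98}) and the text moves on immediately to the isotropic model. Your write-up is therefore a legitimate fleshing-out rather than a comparison target.

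One small point of bookkeeping: the paper's diagram has $\beta\colon\av B\to\av X^{*}$, while the surrounding text describes the Blaschke mapping as sending a point of $\av X$ to its tangent plane. You implicitly use the latter reading when you write $\f b=\beta\circ\f x$. This mismatch is in the paper itself and is harmless for the argument, since what you actually use is that $\av S\subset\av B$ sits over $\av X^{*}$ and that $\delta^{-1}$ recovers $\av X$ from $\av X^{*}$ by biduality; but it would be cleaner to phrase step~7 directly in terms of $\phi=\delta^{-1}\circ\beta$ with $\beta\colon\av B\to\av X^{*}$, i.e.\ $\beta(\av S)=\av X^{*}$ and hence $\phi(\av S)=\delta^{-1}(\av X^{*})=\av X$.
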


Next, consider the generator line $w$ of $\mathcal{B}$ containing the point $\f w= (0,0,1,0)$. Let $\av{I}$ be the hyperplane $x_3=0$ in ${\mathbb{R}^4}$, which is parallel to $w$. We use the new coordinate functions $y_1=x_1$, $y_2=x_2$,  $y_3=x_4$ and define the {\em isotropic mapping}
\begin{equation}
\iota:\quad \av{B}\setminus w\rightarrow \av{I},\, (x_1,x_2,x_3,x_4) \mapsto (y_1,y_2,y_3)=\frac{1}{1-x_3}(x_1,x_2,x_4).
\end{equation}
$\av{I}$ is called the {\em isotropic model} of the dual space, see Fig.~\ref{blaschke_isotrop}. Clearly, the tangent planes with the unit normal $(0,0,1)$ do not have an image point in $\av{I}$. Other parallel tangent planes are represented as points on the same line parallel to the $y_3$-axis; these lines are called the {\em isotropic lines}. By a direct computation one obtains
\begin{equation}
\iota^{-1}:\quad \av{I}\rightarrow \av{B}\setminus w,\, (y_1,y_2,y_3) \mapsto (x_1,x_2,x_3,x_4)=\frac{1}{1+y_1^2+y_2^2}(2y_1,2y_2,1-y_1^2-y_2^2,2y_3).
\end{equation}

\begin{figure}[t]
\begin{center}
  \psfrag{x3}{$x_3$}
  \psfrag{x4}{$x_4$}
  \psfrag{x1}{$x_1,x_2$}
  \psfrag{y3}{$y_3$}
  \psfrag{y1}{$y_1,y_2$}
  \psfrag{w}{$w$}
  \psfrag{W}{$\f w=(0,0,1,0)$}
  \psfrag{B}{$\av{B}$}
  \psfrag{I}{$\av{I}$}
  \psfrag{p}{\hspace*{-1ex}$\f p = (\f n,h)$}
  \psfrag{i}{$\iota(\f p)$}
  \psfrag{n}{$\f n$}
  \includegraphics[width=0.45\textwidth]{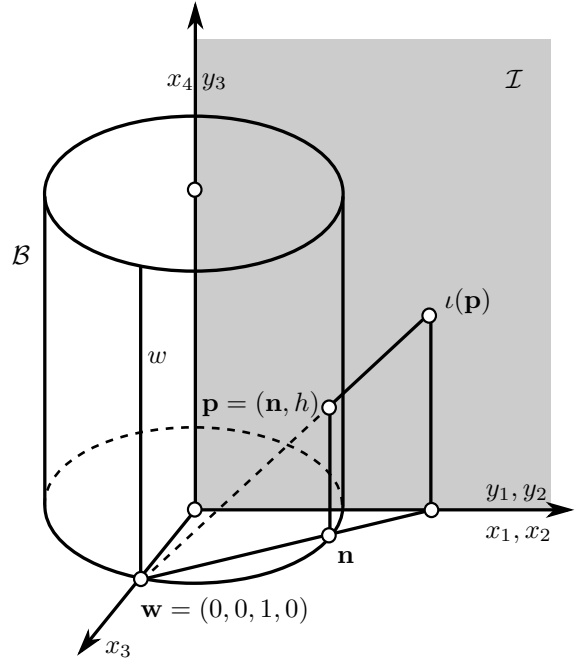}\hfill
\begin{minipage}{0.9\textwidth}
\caption{The Blaschke cylinder $\av{B}$ and the isotropic model $\av{I}$ of the dual space. \label{blaschke_isotrop}}
\end{minipage}
\end{center}
\end{figure}

All the above mentioned properties and mappings are summarized in the following proposition and diagram, which are essential for our method, see \eqref{eq diagram}.

\begin{equation}\label{eq diagram}
\begin{array}{c}
\scalebox{1.2}{
\xymatrix{
& \av{B}  \ar[r]_-{\iota}  \ar[d]^{\beta}  \ar[ld]_{\phi} & \ar@/^{-3pc}/[lld]_\xi \av{I} \ar[ld]^{\theta} \\
\av{X} \ar[r]^{\delta}  & \av{X}^* \vphantom{\bigr)}
}
}
\end{array}
\end{equation}

\begin{corollary}\label{PNisotrop}
Any non-developable PN surface  is the image of a rational surface in $\av{I}$ via the mapping $\xi=\phi\circ\iota^{-1}$.
\end{corollary}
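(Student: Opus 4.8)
The plan is to read the Corollary directly off the commutative diagram~\eqref{eq diagram}: it is nothing but the Proposition composed with the birational map $\iota$. Let $\av{X}$ be a non-developable PN surface. By the Proposition there is a rational surface $\av{R}\subset\av{B}$ on the Blaschke cylinder with $\phi(\av{R})=\av{X}$, where $\phi=\delta^{-1}\circ\beta$. What remains is to transport $\av{R}$ into the isotropic model along $\iota$ and to verify that $\xi$ carries the result back onto $\av{X}$.

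First I would observe that $\av{R}$, being a surface, cannot be contained in the excluded generator $w$ (which is a line), so $\av{R}\cap(\av{B}\setminus w)$ is a dense open subset of $\av{R}$ on which $\iota$ is defined; denote by $\iota(\av{R})\subset\av{I}$ the Zariski closure of its image. The explicit coordinate formulas for $\iota$ and $\iota^{-1}$ displayed above exhibit $\iota$ as a birational morphism $\av{B}\setminus w\to\av{I}$, so its restriction to $\av{R}$ is a birational map $\av{R}\dashrightarrow\iota(\av{R})$; equivalently, composing a rational parameterization of $\av{R}$ with the rational map $\iota$ yields a rational parameterization of $\iota(\av{R})$. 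In either formulation, $\iota(\av{R})$ is again a rational surface.

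Next I would chase the diagram. By definition $\xi=\phi\circ\iota^{-1}$, and $\iota^{-1}\circ\iota$ is the identity on a dense open subset of $\av{B}\setminus w$, so
\[
\xi\bigl(\iota(\av{R})\bigr)=\phi\bigl(\iota^{-1}(\iota(\av{R}))\bigr)=\phi(\av{R})=\av{X}
\]
after passing to Zariski closures (which is harmless, since $\av{X}$ is irreducible and $\xi(\iota(\av{R}))$ is dense in it). Hence the rational surface $\iota(\av{R})\subset\av{I}$ maps onto $\av{X}$ under $\xi=\phi\circ\iota^{-1}$, which is exactly the assertion.

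For the later algorithmic use it is worth making the construction explicit. If $\f n(u,v)$ is a rational parameterization of $\av{S}^2$ (for instance the stereographic one written right after~\eqref{envelope}) and $h$ is the rational support function of $\av{X}$, then $(u,v)\mapsto(\f n(u,v),h(\f n(u,v)))\in\R^4$ is a rational parameterization of $\av{R}\subset\av{B}$; composing it with the rational formula for $\iota$ gives a rational parameterization of $\iota(\av{R})\subset\av{I}$, and applying $\xi$ --- equivalently $\iota^{-1}$ followed by the envelope formula~\eqref{envelope} --- returns $\av{X}$. The non-developability hypothesis enters only through the Proposition, to guarantee that $\delta$ and the support function are well defined in the first place. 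Consequently the only genuinely delicate point in the whole argument is the bookkeeping of the exceptional loci of the birational map $\iota$ --- the generator $w$ on the Blaschke side and the hyperplane $1-x_3=0$ on the isotropic side --- and this is the step I would expect to need the most care, even though it never affects birational type.
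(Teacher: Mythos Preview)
Your argument is correct and is precisely the route the paper intends: the Corollary is stated without proof because it follows immediately from the Proposition by composing with the birational map $\iota$ along the diagram~\eqref{eq diagram}, which is exactly what you do. Your added care about the exceptional locus is fine, though note that the condition $1-x_3=0$ you mention is again the generator $w$ on the Blaschke side rather than a separate locus in $\av{I}$ (the inverse $\iota^{-1}$ is everywhere defined on real $\av{I}$).
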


\medskip
The effectiveness of the construction presented later is guaranteed by the following continuity result.

\begin{proposition}\label{G1primar}
Let $\mathbf y$ be a piecewise rational $C^1$ surface in $\av{I}$. If $\mathbf x=\xi(\mathbf y)$ is regular then it is a $G^1$ piecewise rational surface with Pythagorean normals.
\end{proposition}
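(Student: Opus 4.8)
The plan is to make the map $\xi$ completely explicit on a single rational patch, to read off the rationality and the PN property patchwise, and then to obtain global $G^1$-continuity from the key observation that the tangent plane of $\mathbf x=\xi(\mathbf y)$ at a parameter value is recovered from the mere \emph{value} of $\mathbf y$ there. Hence $C^0$-continuity of $\mathbf y$ already forces tangent-plane continuity of $\mathbf x$, and the $C^1$ hypothesis is needed only so that $\mathbf x$ itself is well defined and point-continuous.

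Concretely, for a rational patch $\mathbf y(u,v)=(y_1,y_2,y_3)$ in $\av{I}$, applying $\iota^{-1}$ produces the rational unit normal field $\f n(u,v)=(1+y_1^2+y_2^2)^{-1}(2y_1,2y_2,1-y_1^2-y_2^2)$ and the support value $h(u,v)=2y_3/(1+y_1^2+y_2^2)$, and $\mathbf x=\xi(\mathbf y)$ is the point of the envelope of the family of tangent planes $T_{\f n,h}$ of \eqref{tangents}; equivalently $\mathbf x$ solves
\[
\f n\cdot\mathbf x=h,\qquad \f n_u\cdot\mathbf x=h_u,\qquad \f n_v\cdot\mathbf x=h_v ,
\]
which is just \eqref{envelope} written in the $(u,v)$-parameterization. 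The regularity of $\mathbf x$ (equivalently, the standing non-developability assumption) forces $\f n,\f n_u,\f n_v$ to be linearly independent, so this $3\times 3$ system is non-degenerate and Cramer's rule expresses $\mathbf x$ rationally in $\f n,\f n_u,\f n_v,h,h_u,h_v$; since these are rational in $(u,v)$ whenever $\mathbf y$ is, each patch of $\mathbf x$ is rational and so $\mathbf x$ is piecewise rational.

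Next I would extract the PN property (essentially Corollary~\ref{PNisotrop} read in the converse direction): differentiating $\f n\cdot\mathbf x=h$ and substituting the other two equations yields $\f n\cdot\mathbf x_u=\f n\cdot\mathbf x_v=0$, so, $\mathbf x$ being regular, $\f n$ is a unit normal field of $\mathbf x$ and $\mathbf x_u\times\mathbf x_v=\sigma\,\f n$ with $\sigma=(\mathbf x_u\times\mathbf x_v)\cdot\f n$ rational; hence $\|\mathbf x_u\times\mathbf x_v\|^2=\sigma^2$, i.e.\ \eqref{PNcondition} holds. The same computation shows that the tangent plane of $\mathbf x$ at $(u,v)$ is the plane through $\mathbf x(u,v)$ with normal $\f n(u,v)$, that is, precisely $T_{\f n(u,v),h(u,v)}$; it therefore depends only on $\iota^{-1}(\mathbf y(u,v))$, hence only on the value $\mathbf y(u,v)$.

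Finally I would glue two adjacent patches along their common boundary. Since $\mathbf y$ is globally $C^0$, so is $(\f n,h)=\iota^{-1}\circ\mathbf y$, and by the previous step the tangent-plane field of $\mathbf x$ is continuous across the boundary; since $\mathbf y$ is moreover globally $C^1$, the data $\f n_u,\f n_v,h_u,h_v$ match from both sides, so by non-degeneracy of the envelope system $\mathbf x$ matches as well, i.e.\ $\mathbf x$ is globally $C^0$. Point-continuity together with tangent-plane continuity is exactly $G^1$-continuity, which finishes the argument. The one genuinely delicate step is the envelope construction: the regularity hypothesis must be invoked both to guarantee that the $3\times 3$ system is uniquely solvable — so $\mathbf x$ is single-valued and depends continuously (and rationally) on the plane data — and to ensure that the resulting $\mathbf x$ is a true surface tangent to each of its defining planes. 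The excluded generator $w$ through $\f w=(0,0,1,0)$ plays no role, since a surface lying in $\av{I}$ is automatically disjoint from it.
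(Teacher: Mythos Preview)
Your argument is correct and follows the same overall strategy as the paper: decompose $\xi=\phi\circ\iota^{-1}$, observe that $\iota^{-1}$ preserves $C^1$, that $\phi$ (the envelope map \eqref{envelope}) involves one differentiation and so drops the continuity to $C^0$, and that the tangent plane of $\mathbf x$ is encoded in the \emph{value} $(\f n,h)$ alone, hence varies continuously.

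The difference is one of explicitness. The paper's proof is terse and outsources the $G^1$ step to \cite{Gravesen2007}, whereas you write the envelope system $\f n\cdot\mathbf x=h$, $\f n_u\cdot\mathbf x=h_u$, $\f n_v\cdot\mathbf x=h_v$ explicitly and read off rationality via Cramer's rule, the PN property, and the tangent-plane dependence on zeroth-order data directly. This makes your argument self-contained and arguably more transparent than the paper's. One small imprecision: you attribute the linear independence of $\f n,\f n_u,\f n_v$ to ``regularity of $\mathbf x$'', but this is really the standing non-developability assumption (non-degenerate Gauss image) from the beginning of Section~\ref{PN_isotrop}; the regularity hypothesis in the proposition is the separate condition that $\mathbf x$ has no ridges, which the paper invokes so that the tangent-plane field is genuinely tangent to a locally graph-like surface. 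Your closing paragraph touches on both roles, but it would be cleaner to separate them.
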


\begin{proof}
By the regularity of $\mathbf x$ we mean that at every point there is a suitable tangent plane so that the projection of the surface to this plane is a homeomorphism on some neighborhood of this point. This essentially means that we exclude the sharp edges (ridges). Conditions for the regularity are discussed in more detail in the paragraph following this proof.
We have $\xi=\phi\circ\iota^{-1}$. The mapping $\iota$ (and its inverse) is a diffeomorphism which does not change the continuity. So the patch $\iota^{-1}(\mathbf y)$  on the Blaschke cylinder is clearly also $C^1$. The mapping  $\phi$ is given by formula \eqref{envelope}, which contains the first order differentiation. For this reason the surface $\mathbf x=\xi(\mathbf y)$ is only~$C^0$. However, in fact the continuity of $\iota^{-1}(\mathbf y)$ describes a continuous variation of a certain well defined plane.  It is shown in \citep{Gravesen2007} that if $\mathbf x$ is regular then the inversion of the projection to this plane is locally $C^1$ which shows the global $G^1$ continuity of $\mathbf x$.
\end{proof}

As it has been noticed earlier \citep{PePo96,Gravesen2007,SiGrJu08,Blazkova2014358} despite the fact that the variation of the planes in the previous proposition is continuous, the resulting surface may sometimes exhibit sharp edges (ridges). To understand this phenomena let us first investigate, for the sake of simplicity, two examples of planar curves, see Fig.~\ref{ridges}. In this case $h(\mathbf n)$ is univariate and the function $h+h''$ gives the oriented radius of curvature \citep{SiGrJu08}. If this expression vanishes the curve exhibits a cusp at which the curvature goes to infinity.

The first example is a part of a hypocycloid (Fig.~\ref{ridges}, left) where the support function $h(\mathbf n)$ is perfectly smooth but still a cusp occurs. The second example (Fig.~\ref{ridges}, right) shows two circle segments connected tangentially and producing a sharp jump in the signed curvature.

\begin{figure}[t]
\begin{center}
\begin{tabular}{cc}
  \includegraphics[height=0.18\textwidth]{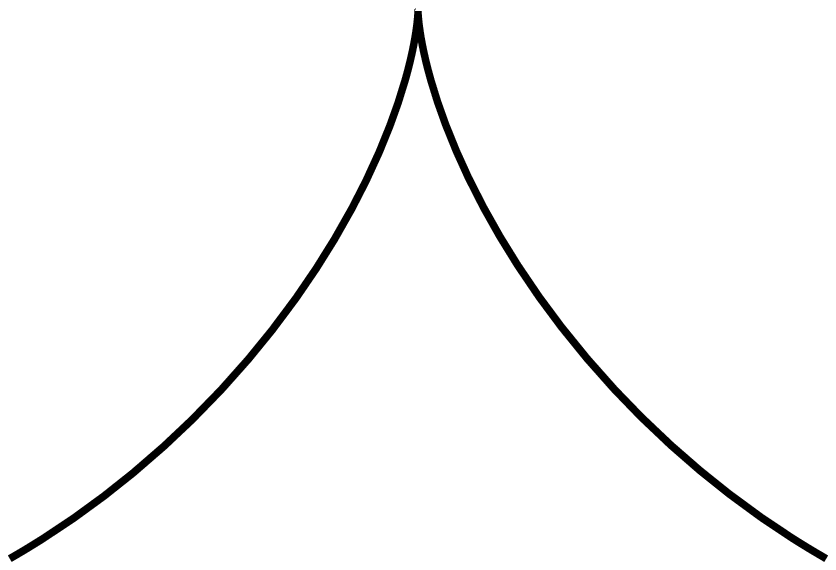}&
    \includegraphics[height=0.18\textwidth]{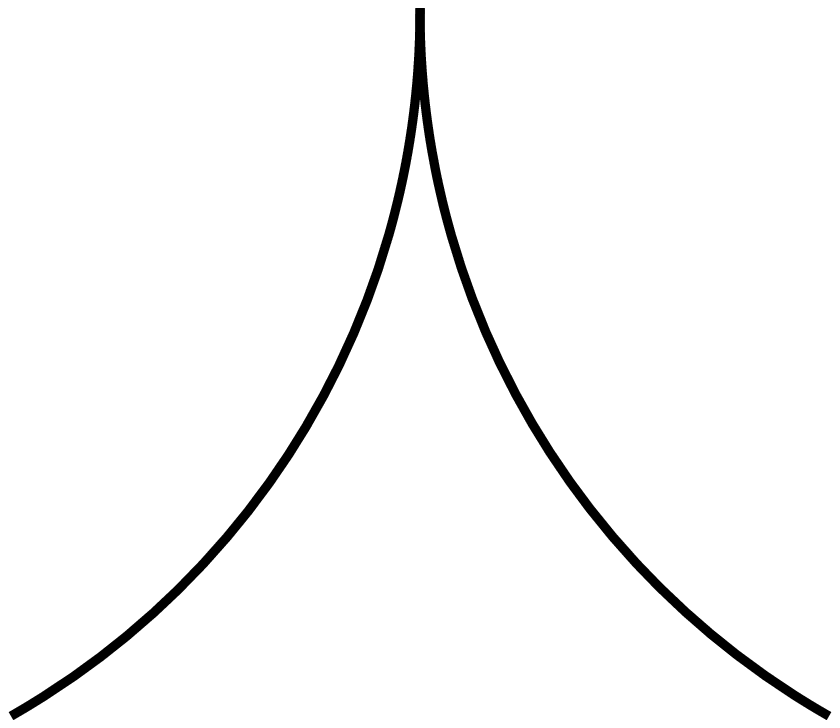}\\
     \includegraphics[width=0.33\textwidth]{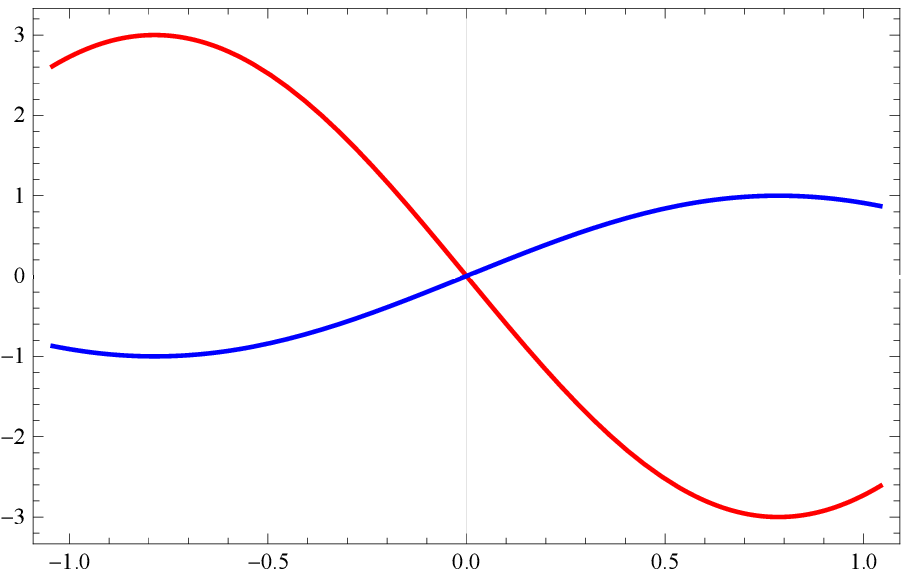}&
    \includegraphics[width=0.33\textwidth]{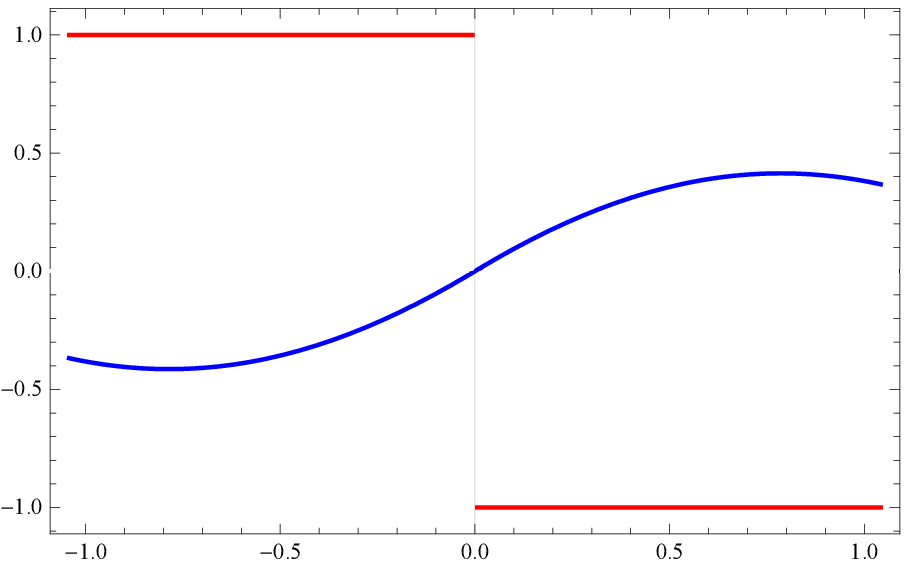}\\
\end{tabular}
\begin{minipage}{0.9\textwidth}
\caption{Curves with cusps and continuous tangent line. The support function (blue) is $C^\infty$ on the left and $C^1$ on the right. The value of $h+h''$ (radius of curvature) is displayed in red.\label{ridges}}
\end{minipage}
\end{center}
\end{figure}

\newcommand{\Hess}{{\mathrm{Hess}}}

A similar behavior can be described for surfaces. In this case the critical expression (corresponding to $h+h''$) is the matrix function $\Hess_{\mathcal{S}^2}h+hI$, where $\Hess_{\mathcal{S}^2}$ denotes the intrinsic Hessian with respect to the unit sphere $\mathcal{S}^2$ (the base of the Blaschke cylinder $\cal B$) and $I$ is the identity. In fact as shown in \citep{SiGrJu08}, it holds that $d \mathbf x_h=\Hess_{\mathcal{S}^2}h+hI$, so this quantity allows us to control the features of the resulting surface. Vanishing of the $\det\left(\Hess_{\mathcal{S}^2}h+hI\right)$ or a jump in the signs of one its eigenvalues indicates the occurrence of a sharp edge.  For practical modeling purposes let us remark, that a sharp edge typically occurs when the data from a surface with parabolic curves are interpolated. In other cases this phenomena will disappear under subdivision. Furthermore, as our method is based on the construction of Coons patches with boundaries being Fergusson cubics determined by suitably chosen tangent vectors at given points in the isotropic space (see Section~\ref{sec isotr}), it is theoretically also possible to avoid ridges by optimizing the lengths of the tangent vectors (which can serve as free modelling shape parameters, see Section~\ref{sec alg}) with a suitable objective function. One can for example use the function $\int_{\Omega} \det\left(\Hess_{\mathcal{S}^2}h+hI\right)^{-2}\mathrm{d}A_{\mathcal{S}^2}$, where $\mathrm{d}A_{\mathcal{S}^2}$ is the area element on the sphere and  ${\Omega}\subset \mathcal{S}^2$ is the Gauss image of the constructed surface. In fact $\det\left(\Hess_{\mathcal{S}^2}h+hI\right)^{-2}=K^2$ and when minimizing its integral we can avoid the ridges at which the Gauss curvature $K$ tends to infinity, see also \citep{Gravesen2007}.

\section{Coons patches in the isotropic model and PN patches in the primal space}\label{sec isotr}
We will use rectangular patches throughout this paper. In order to construct a piecewise PN interpolation surface $\mathbf x$ in the primal space, we will consider rational patches in the isotropic model.

\begin{figure}[t]
\begin{center}
  \psfrag{1}{$\f a_{00}$}
  \psfrag{2}{$\f c_0(u)$}
  \psfrag{3}{$\f a_{10}$}
  \psfrag{4}{$\f d_0(v)$}
  \psfrag{5}{$\f y(u,v)$}
  \psfrag{6}{$\f d_1(v)$}
  \psfrag{7}{$\f a_{01}$}
  \psfrag{8}{$\f c_1(u)$}
  \psfrag{9}{$\f a_{11}$}
  \includegraphics[width=0.45\textwidth]{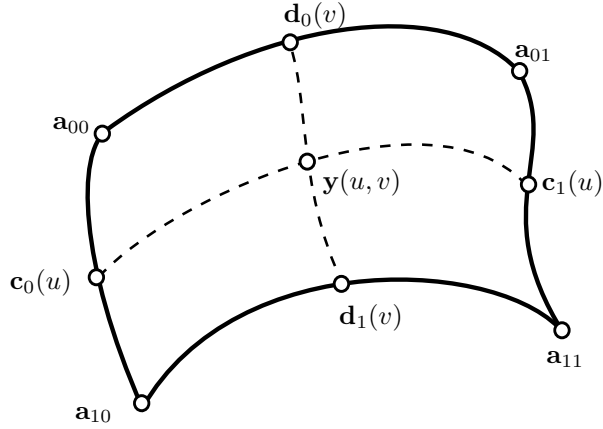}\hfill
\begin{minipage}{0.9\textwidth}
\caption{The Coons patch  $\f y(u,v)$ determined by \eqref{krivky}. \label{fig coons}}
\end{minipage}
\end{center}
\end{figure}

Suppose we are given four $C^1$ continuous boundary curves $c_0(u)$, $c_1(u)$, $d_0(v)$, $d_1(v)$ in the isotropic space $\av{I}$ which meet at the four corners
\begin{equation}\label{krivky}
c_0(0)=d_0(0)=\f a_{00},\quad
c_0(1)=d_1(0)=\f a_{10},\quad
c_1(0)=d_0(1)=\f a_{01},\quad
c_1(1)=d_1(1)=\f a_{11},
\end{equation}
see Fig.~\ref{fig coons}. Then we can apply the construction of the so called {\em bicubic Coons patch}, see e.g. \cite{Farin1988}. It is a parametric surface $\f y(u,v):\, [0,1]\times[0,1]\rightarrow \R^k$ ($k=3$ in our case) determined by the identity
\begin{equation}\label{Coons_bicubic}
\big(F_0(u),-1,F_1(u)\big)\cdot
\begin{pmatrix}
\f a_{00} & d_0(v) &  \f a_{01}\\
c_0(u) & \f y(u,v) &  \f c_1(u)\\
\f a_{10} & d_1(v) &  \f a_{11}
\end{pmatrix}
\cdot\big(F_0(v),-1,F_1(v)\big)^T
=0,
\end{equation}
where the blending functions $F_0,F_1$ are two of the basic cubic Hermite polynomials used in the construction of the Ferguson cubic, i.e., $F_0(t)=2t^3-3t^2+1$ and $F_1(t)=-2t^3+3t^2$.

We recall that the matrix in \eqref{Coons_bicubic} directly reflects the scheme in Fig.~\ref{fig coons}.
Formula \eqref{Coons_bicubic} ensures that the constructed patch interpolates all the given boundary curves  $c_0(u),c_1(u),d_0(v),d_1(v)$. Note that if only two tangent vectors at every point $\f a_{ij}$ instead of the whole boundary curves are given then one has to  first construct some boundary curves via interpolating these points and vectors by a suitable $C^1$ Hermite interpolation curves.

\medskip
By a direct computation \citep{Farin1988} it can be proved a fundamental property satisfied by the bicubically blended Coons patches

\begin{lemma}\label{C1Coons}
Two bicubic Coons patches sharing the same boundary curve and the same tangent vectors at the end points of the adjacent transversal boundary curves are connected with the $C^1$ continuity.
\end{lemma}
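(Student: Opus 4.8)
The plan is to show $C^1$ continuity by a direct analysis of the partial derivatives of the bicubic Coons patch across a shared boundary, working from the defining identity \eqref{Coons_bicubic}. First I would set up coordinates: let $\f y(u,v)$ and $\tilde{\f y}(u,v)$ be the two patches, arranged so that they share the boundary curve along, say, $v=1$ for $\f y$ and $v=0$ for $\tilde{\f y}$; write $\f y(u,1)=\tilde{\f y}(u,0)=c_1(u)$ for the common curve. The $C^0$ matching along this curve is immediate from the interpolation property already noted after \eqref{Coons_bicubic}, so the whole content is the matching of the transversal derivatives $\f y_v(u,1)$ and $\tilde{\f y}_v(u,0)$ along the seam (the tangential derivative $\f y_u(u,1)=c_1'(u)=\tilde{\f y}_u(u,0)$ matches automatically once the curves agree).

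Next I would solve \eqref{Coons_bicubic} explicitly for $\f y(u,v)$, which gives the usual closed form of the bicubically blended Coons patch as a sum of two ruled (lofted) surfaces minus their bilinear "tensor-product" correction term. Differentiating this expression with respect to $v$ and evaluating at $v=1$, the key observation is that $F_0,F_1$ and their derivatives take the standard Hermite values at the endpoints: $F_0(1)=0$, $F_1(1)=1$, $F_0'(1)=F_1'(1)=0$, and at the other end $F_0(0)=1$, $F_1(0)=0$, $F_0'(0)=F_1'(0)=0$. Because all four blending-function derivatives vanish at both endpoints, when one computes $\f y_v(u,1)$ every term that would involve the \emph{interior} shape of the patch (the opposite boundary $c_0(u)$, the bilinear correction, and the non-matched cross-boundary data) drops out, and what survives is expressible purely in terms of $d_0(v),d_1(v)$ and their $v$-derivatives evaluated along the seam — i.e.\ in terms of the data $c_1(u)$ together with the tangent vectors $d_0'(0), d_0'(1), d_1'(0), d_1'(1)$ at the endpoints of the transversal boundaries $d_0,d_1$ of the patch $\f y$. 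Performing the same computation for $\tilde{\f y}_v(u,0)$ yields an analogous formula in terms of $c_1(u)$ and the endpoint tangent vectors of the transversal boundaries of $\tilde{\f y}$. By hypothesis the two patches share the boundary curve \emph{and} the tangent vectors of the adjacent transversal boundary curves at its endpoints, so these two formulas coincide, giving $\f y_v(u,1)=\tilde{\f y}_v(u,0)$ for all $u$; together with the already-established agreement of $\f y$ and $\f y_u$ along the seam, this is exactly $C^1$ continuity.

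The main obstacle is purely bookkeeping rather than conceptual: one must keep careful track of which of the four boundary curves of each patch plays the role of "seam," "opposite curve," and "transversal," and verify that after substituting the endpoint Hermite values the surviving expression for the transversal derivative truly depends only on the shared curve and the shared endpoint tangents, with no leftover dependence on the patch interior. Since the computation is the classical one for bicubically blended Coons patches, I would simply carry out the differentiation of the closed form and cite \citep{Farin1988} for the routine identities, presenting only the endpoint-evaluation step that makes the interior terms vanish.
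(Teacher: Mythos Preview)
Your approach is correct and is precisely the ``direct computation'' the paper alludes to: the paper does not give its own proof of Lemma~\ref{C1Coons} but simply states that it follows by a direct computation and cites \citep{Farin1988}. One small sharpening: when you evaluate $\f y_v(u,1)$ using $F_0'(1)=F_1'(1)=0$, the surviving expression is exactly $F_0(u)\,d_0'(1)+F_1(u)\,d_1'(1)$, so only the two transversal tangent vectors at the seam endpoints (not all four of $d_0'(0),d_0'(1),d_1'(0),d_1'(1)$) actually appear --- which is precisely what the hypothesis of the lemma provides.
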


From this lemma follows one of the nicest application properties of the bicubic Coons construction. Specifically, given a network of curves, the global interpolating surface that one gets using the bicubic Coons construction is globally a $C^1$ surface. Combined with  Proposition~\ref{G1primar} we obtain the fundamental theoretical result justifying our Hermite PN construction.

\begin{proposition}\label{G1coons}
Let $\f{y}$ be a globally $C^1$ continuous network of piecewise rational Coons patches in the space $\av{I}$. Then $\f{x}=\xi(\f{y})$ is a piecewise $G^1$ surface with Pythagorean normals.
\end{proposition}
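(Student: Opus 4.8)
The plan is to combine the two structural results the paper has just established: Lemma~\ref{C1Coons}, which guarantees that a network of bicubic Coons patches glued along shared boundary curves with matching cross-boundary tangents is globally $C^1$, and Proposition~\ref{G1primar}, which transports $C^1$ regularity in the isotropic model to $G^1$ regularity (plus the PN property) in the primal space under the map $\xi=\phi\circ\iota^{-1}$. The statement of Proposition~\ref{G1coons} is essentially the concatenation of these two facts, so the proof should be short.

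First I would recall that each individual Coons patch $\f y_{ij}$ constituting the network is, by construction via \eqref{Coons_bicubic}, a rational surface in $\av{I}$ (the blending functions $F_0,F_1$ are polynomial and the boundary curves are rational), hence $\f y$ is piecewise rational; and by hypothesis the whole network $\f y$ is globally $C^1$ (this is exactly what Lemma~\ref{C1Coons} delivers when adjacent patches share a boundary curve and the transversal tangent vectors agree at the endpoints, which is the standing assumption on how the network is assembled). So $\f y$ satisfies the hypotheses of Proposition~\ref{G1primar}: it is a piecewise rational $C^1$ surface in $\av{I}$.

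Next I would apply Proposition~\ref{G1primar} patch by patch: for each patch on which $\xi(\f y)$ is regular (in the sense of that proposition — a suitable tangent plane onto which the surface projects locally homeomorphically), $\xi$ preserves the piecewise rational character and upgrades $C^1$ in $\av{I}$ to $G^1$ in $\R^3$, and simultaneously yields the Pythagorean normal property via Corollary~\ref{PNisotrop} (the image of a rational surface in $\av{I}$ under $\xi$ is a rational PN surface). The only subtlety is that Proposition~\ref{G1primar} is stated for a single $C^1$ piece, whereas here we must also check that $G^1$ contact is maintained \emph{across} the patch boundaries of the network, not merely within each patch. But this is immediate: along a shared boundary curve the two adjacent isotropic patches agree to first order, $\iota^{-1}$ is a diffeomorphism so the lifts to the Blaschke cylinder still agree to first order along that curve, and $\phi$ (given by \eqref{envelope}) then assigns to each boundary point the same tangent plane from both sides — the common plane varies $C^1$ along the curve, and the regularity assumption makes the inverse of the projection onto that plane locally $C^1$ from either side, so the two primal patches share a common tangent plane all along the seam, i.e.\ they join $G^1$.

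The main obstacle — really the only place requiring care — is the regularity hypothesis: $G^1$ continuity genuinely fails at points where $\det(\Hess_{\mathcal{S}^2}h+hI)$ vanishes or an eigenvalue changes sign (the ridge/cusp phenomenon discussed after Proposition~\ref{G1primar}), so the conclusion must be stated, as it is, only where $\f x=\xi(\f y)$ is regular; away from such degeneracies the argument above is complete. I would therefore close by noting that the proof is nothing more than the composition of Lemma~\ref{C1Coons} with Proposition~\ref{G1primar}, the rationality being inherited at each stage, and that the ridge set is the expected exceptional locus.
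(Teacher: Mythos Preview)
Your proposal is correct and matches the paper's approach: the paper does not give a formal proof of Proposition~\ref{G1coons} but states explicitly, in the paragraph preceding it, that the result follows by combining Lemma~\ref{C1Coons} (global $C^1$ of the Coons network) with Proposition~\ref{G1primar} (transport of $C^1$ in $\av{I}$ to $G^1$ PN in the primal space via $\xi$). Your write-up in fact supplies more detail than the paper does, in particular the check of $G^1$ contact across patch seams and the explicit acknowledgment of the regularity caveat.
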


The observations and results above allow us to design a simple construction algorithm which is essentially local. More precisely, for a given network of position data (points) and first order data (normals) we will construct a family of PN patches yielding a piecewise surface which is globally $G^1$ continuous. Specifically, a modification of some of these data will modify only the adjacent patches.

\bigskip
Suppose we are given a network of the points $\f p_{i,j}$ with the associated unit normal vectors $\f n_{i,j}$ in the primal space, where $i\in \{0,1,\ldots, m \}$ and $j\in \{0,1,\ldots, n \}$. Our goal is to construct a set of rational PN patches $\f x_{i,j}(u,v)$ for  $i\in \{1,\ldots, m \}$, $j\in \{1,\ldots, n \}$. Each patch will be defined on the interval $[0,1]\times [0,1]$ and will interpolate the corner points $\f p_{i-1,j-1}$, $\f p_{i,j-1}$, $\f p_{i-1,j}$, $\f p_{i,j}$ together with the corresponding normals. Moreover the union of these patches $\f x=\bigcup_{i,j}\f x_{i,j}$ is required to be globally $G^1$ continuous.

Based on the theoretical results from the previous sections, we will construct the patches  $\f x_{i,j}(u,v)$ as the images of the rational patches $\f y_{i,j}(u,v)$ in the isotropic space $\av{I}$, i.e.,
\begin{equation}
\f x_{i,j}(u,v)=\xi(\f y_{i,j}(u,v)).
\end{equation}
First, for each point we evaluate the support function $h_{i,j}= \f p_{i,j} \cdot \f n_{i,j}$, cf. \eqref{tangents}. Next we obtain the corresponding network of points in the isotropic space $\av{I}$ as
\begin{equation}\label{transfbody}
\f a_{i,j}=\iota(\f n_{i,j},h_{i,j}).
\end{equation}

\medskip
In order to apply the bicubic Coons patch construction, we need to construct boundary curves between the points $\f a_{i,j}$. From the identity
\begin{equation}
\f n(u,v) \cdot \f x (u,v) - h (u,v) =0,
\end{equation}
it follows
\begin{equation}\label{der_support}
\begin{array}{rcl}
\f (n_u,h_u)\cdot (\f p,-1) &=& 0,\\
\f (n_v,h_v) \cdot (\f p,-1) &=& 0.
\end{array}
\end{equation}
So, let us observe that any curve $\f c(t)$ lying on the piecewise surface $\f{y}$ such that $\f c(t_0)=\mathbf a_{i,j}$ must satisfy
\begin{equation}
[J (\iota^{-1}) \f c' (t_0)]\cdot (\f p_{i,j},-1)=0,
\end{equation}
where $J (\iota^{-1})$ denotes the Jaccobi matrix of the mapping $\iota^{-1}$. It means that the patch possessing $\f a_{i,j}$ as its corner point (in the isotropic space) must be tangent to the 2-plane $\tau_{i,j}$ given as
\begin{equation}
\tau_{i,j}=\{ \f v:\, [J (\iota^{-1}) \f v]\cdot (\f p_{i,j},-1)=0\}
\end{equation}
at this point.

Let us stress out that the  original PN interpolation problem (prescribed points and normal vectors, i.e., tangent planes) in the primal space was difficult to solve. Using the methods presented above we have transformed it to the same kind of the interpolation problem (prescribed points and normal vectors, i.e., tangent planes), now in the isotropic space~$\av{I}$. However, after the transformation we do not have to care about the PN property --  this property is now obtained for free.

\begin{remark}\rm
One limitation of the presented method should be noted. As the north pole $\f w$, see Fig.~\ref{blaschke_isotrop}, is the center of the stereographic projection, the points on the unit sphere $\mathcal{S}^2$ (the unit normals $\f n_{i,j}$) must be suitably distributed. In other words, the Gauss image of the interpolating surface cannot contain  $\f w$. This means that in some cases a preliminary coordinate transformation is needed.

Let us also remark that one can alternatively interpolate the Gauss image (given data $\f n_{i,j}$) and the support function (data $h_{ij}$ computed from given data $\f n_{i,j}$ and $\f p_{i,j}$) separately, cf. \citep{Gravesen2007}. Firstly, one interpolates data  $\f n_{i,j}$ by a piecewise rational $C^1$ surface on $\mathcal{S}^2$, see e.g. \citep{AlNeSchu96}. Then using \eqref{der_support} we arrive at the values of the partial derivatives  $h_u, h_v$ at the points $\f p_{i,j}$ and computing e.g. one-dimensional Coons patches we arrive at the piecewise $C^1$ function $h(u,v)$. The sought PN parameterization is obtained just by switching from the dual to the primary space. For the sake of lucidity we prefer to apply the isotropic model as this approach is more illustrative and needs less steps.
\end{remark}

\section{PN patches interpolating given data}\label{sec alg}

To start the Coons construction in $\av{I}$, we must first construct curves  $\f c_{i,j}(u)$ connecting the points $\f a_{i,j}$ and  $\f a_{i+1,j}$ and curves $\f d_{i,j}(v)$ connecting the points $\f a_{i,j}$ and  $\f a_{i,j+1}$, simultaneously satisfying the condition that they are tangent to the planes $\tau_{i,j}$ at each of the two boundary points.
Clearly, any arbitrary curve fulfilling these constraints may be considered as one of the input boundary curves for scheme \eqref{Coons_bicubic}. For the sake of simplicity we can, for instance, take the Ferguson cubics interpolating with $C^1$ continuity the given points and some suitably chosen associated boundary vectors. Other possible polynomial curves of low parameterization degree, which can be easily used, might be e.g. parabolic biarcs.

\smallskip
We have considered the following boundary vectors at the points from the network in $\av{I}$, which represent a natural choice for the tangent vectors of the boundary curves:
\begin{itemize}
\item
for an inner point $\f a_{i,j}$ (see Fig.~\ref{points_in_net}, green) we have taken the projections of the difference vectors $\f a_{i+1,j}-\f a_{i-1,j}$ and  $\f a_{i,j+1}-\f a_{i,j-1}$ into the tangent plane $\tau_{i,j}$;
\item
for a non-corner point $\f a_{i,0}$, or $\f a_{i,n}$ on the $u$-boundary (see Fig.~\ref{points_in_net}, blue) we have taken the projections of the difference vectors $\f a_{i+1,0}-\f a_{i-1,0}$ and $2(\f a_{i,1}-\f a_{i,0})$, or $\f a_{i+1,n}-\f a_{i-1,n}$ and $2(\f a_{i,n}-\f a_{i,n-1})$, respectively,  into the tangent plane $\tau_{i,0}$, or $\tau_{i,n}$, respectively;
\item in a similar way, for a non-corner point $\f a_{0,j}$, or $\f a_{n,j}$ on the $v$-boundary (see Fig.~\ref{points_in_net}, blue) we have taken the projections of the difference vectors $2(\f a_{1,j}-\f a_{0,j})$ and $\f a_{0,j+1}-\f a_{0,j-1}$, or $2(\f a_{n,j}-\f a_{n-1,j})$ and $\f a_{n,j+1}-\f a_{n,j-1}$, respectively,  into the tangent plane $\tau_{0,j}$, or $\tau_{n,j}$, respectively;
\item
for the corner point $\f a_{0,0}$ (see Fig.~\ref{points_in_net}, red) we have taken the projections of the difference vectors $2(\f a_{1,0}-\f a_{0,0})$ and  $2(\f a_{0,1}-\f a_{0,0})$ into the tangent plane $\tau_{0,0}$,
for the corner point $\f a_{n,0}$ we have taken the projections of the difference vectors $2(\f a_{n,0}-\f a_{n-1,0})$ and  $2(\f a_{n,1}-\f a_{n,0})$ into the tangent plane $\tau_{n,0}$,
for the corner point $\f a_{0,n}$ we have taken the projections of the difference vectors $2(\f a_{1,n}-\f a_{0,n})$ and  $2(\f a_{0,n}-\f a_{0,n-1})$ into the tangent plane $\tau_{0,n}$, and
for the corner point $\f a_{n,n}$ we have taken the projections of the difference vectors $2(\f a_{n,n}-\f a_{n-1,n})$ and  $2(\f a_{n,n}-\f a_{n,n-1})$ into the tangent plane $\tau_{n,n}$.
\end{itemize}

Of course, the lengths of the chosen vectors can be easily modified and serve as possible modelling shape parameters. This is useful, for instance, when we want to
avoid ridges by optimizing these lengths with respect to a suitable objective function, cf. the final paragraph in Section~\ref{PN_isotrop}. Subsequently, we construct the rational patches $\f y_{i,j}$ using formula \eqref{Coons_bicubic} and applying $\xi$ we obtain the patches $\f x_{i,j}$ and thus the sought piecewise smooth PN surface $\f{x}$.

\begin{figure}[t]
\begin{center}
  \includegraphics[width=0.4\textwidth]{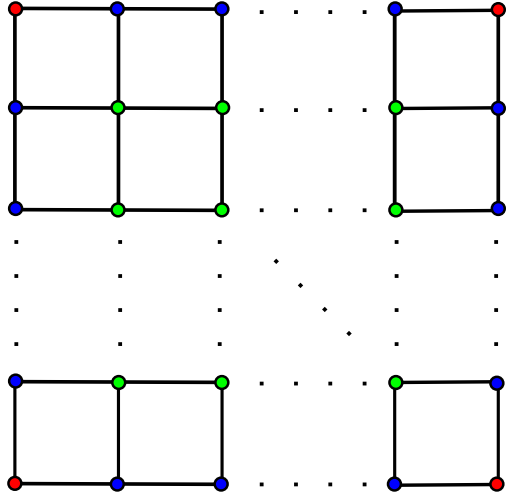}\hfill
\begin{minipage}{0.9\textwidth}
\caption{A net of points in $\av{I}$ -- corner points (red), non-corner boundary points (blue) and inner points (green). \label{points_in_net}}
\end{minipage}
\end{center}
\end{figure}

\medskip
In what follows we will show the functionality of the designed algorithm on a particular example. We will demonstrate the whole technique on one macro-element consisting of nine ordered points with the associated normals, i.e., a smooth surface consisting of four PN patches is constructed. For a bigger network the process will be the same, as the designed method is strictly local.

\begin{example}\rm
Let be given a network of the points $\f p_{i,j}$

\begin{equation}
(\f p_{i,j})=\left(
\begin{array}{ccc}
 \displaystyle (0,0,0) & \displaystyle \left(0,-\frac{11}{72},-\frac{1}{12}\right) & \displaystyle \left(0,-\frac{2}{9},-\frac{1}{3}\right) \\[2ex]
 \displaystyle \left(\frac{11}{72},0,\frac{1}{12}\right) & \displaystyle \left(\frac{7}{36},-\frac{7}{36},0\right) & \displaystyle \left(\frac{23}{72},-\frac{11}{36},-\frac{1}{4}\right) \\[2ex]
 \displaystyle \left(\frac{2}{9},0,\frac{1}{3}\right) & \displaystyle \left(\frac{11}{36},-\frac{23}{72},\frac{1}{4}\right) & \displaystyle \left(\frac{5}{9},-\frac{5}{9},0\right)
\end{array}
\right)
\end{equation}
with the associated (non-unit) normal vectors $\f n_{i,j}$
\begin{equation}
(\f n_{i,j})=\left(
\begin{array}{ccc}
 (0,0,-1) & (0,4,-3) & (0,1,0) \\[1ex]
 (4,0,-3) & (2,2,-1) & (4,8,1) \\[1ex]
 (1,0,0) & (8,4,1) & (2,2,1)
\end{array}
\right),
\end{equation}
where $i,j=0,1,2$, see Fig.~\ref{zadani}.

\begin{figure}[t]
\begin{center}
   \includegraphics[width=0.6\textwidth]{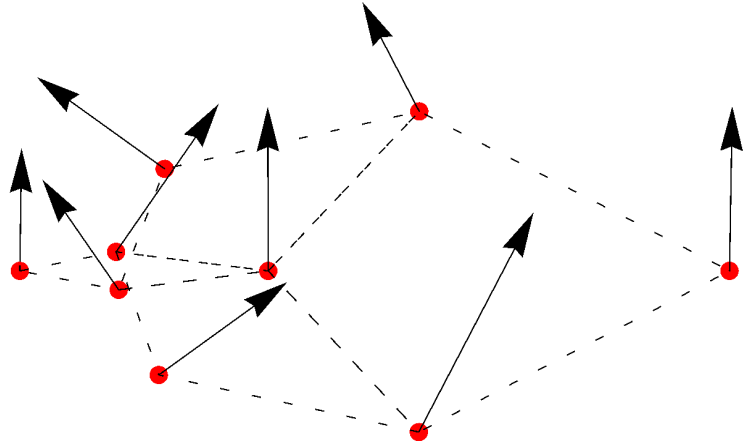}
\begin{minipage}{0.9\textwidth}
\caption{A network of given  points $\f p_{i,j}$ with the associated normal directions $\f n_{i,j}$. \label{zadani}}
\end{minipage}
\end{center}
\end{figure}

\smallskip
Using \eqref{transfbody} we find the nine points  $\f a_{i,j}$ (4 corner points, 4 non-corner boundary points, 1 inner point) in $\av{I}$, see Fig.~\ref{isotropic}, with the associated tangent vectors of the boundary curves obtained by the approach from the beginning of this section. Next, we construct 12 Fergusson cubics, see Fig.~\ref{isotropic}, as the input boundary curves for the bicubic Coons construction.

\begin{figure}[H]
\begin{center}
\includegraphics[width=0.7\textwidth]{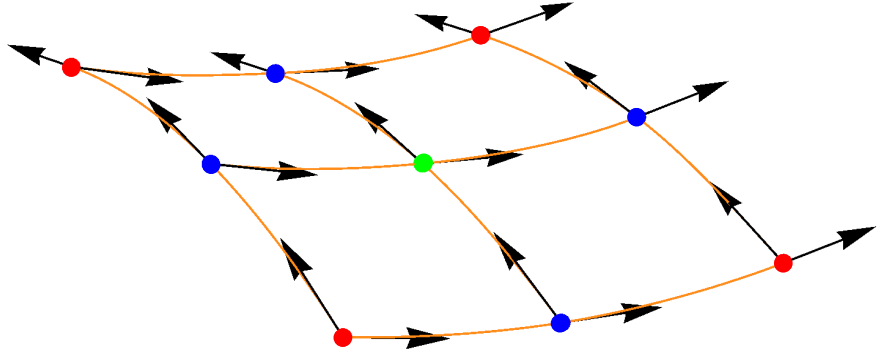}
\begin{minipage}{0.9\textwidth}
\caption{4 corner points (red), 4 non-corner boundary points (blue), 1 inner point(green) in $\av{I}$ with the associated tangent vectors of the boundary curves spanning the tangent planes $\tau_{ij}$, and the constructed Fergusson cubics (orange). \label{isotropic}}
\end{minipage}
\end{center}
\end{figure}

After computing the four bicubic Coons patches and applying the mapping  $\xi$ on each of them we obtain a smooth piecewise PN surface (given by PN parameterizations of each patch) interpolating given Hermite data, see Fig.~\ref{reseni}. Finally, computations show that $\det\left(\mathrm{Hess}_{\mathcal{S}^2}h+hI\right)\neq 0$ at all points so no sharp edges occur for given data, cf.~Section~3.

\begin{figure}[ht]
\begin{center}
\includegraphics[width=0.7\textwidth]{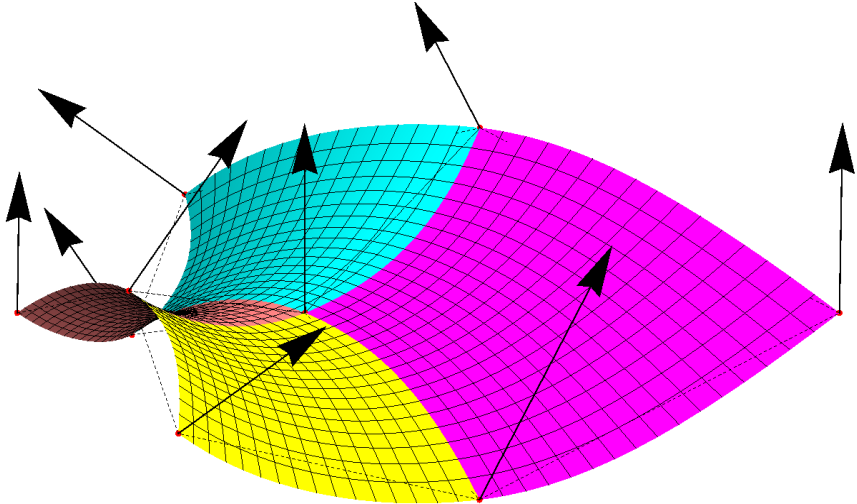}
\begin{minipage}{0.9\textwidth}
\caption{A smooth piecewise interpolation surface consisting of four PN patches. \label{reseni}}
\end{minipage}
\end{center}
\end{figure}

\end{example}

\begin{remark}\rm
One of the advantages of the designed method (based on exploiting the Coons patches) is the possibility to use the length of the tangent vectors in the isotropic space as free construction parameters (as already mentioned at~the end of Section~\ref{PN_isotrop}). Figure \ref{Sridges} shows how a suitable choice of these vector can improve the resulting patch and help to avoid the ridges.

\begin{figure}[tbh]
\begin{center}
   \includegraphics[width=0.45\textwidth]{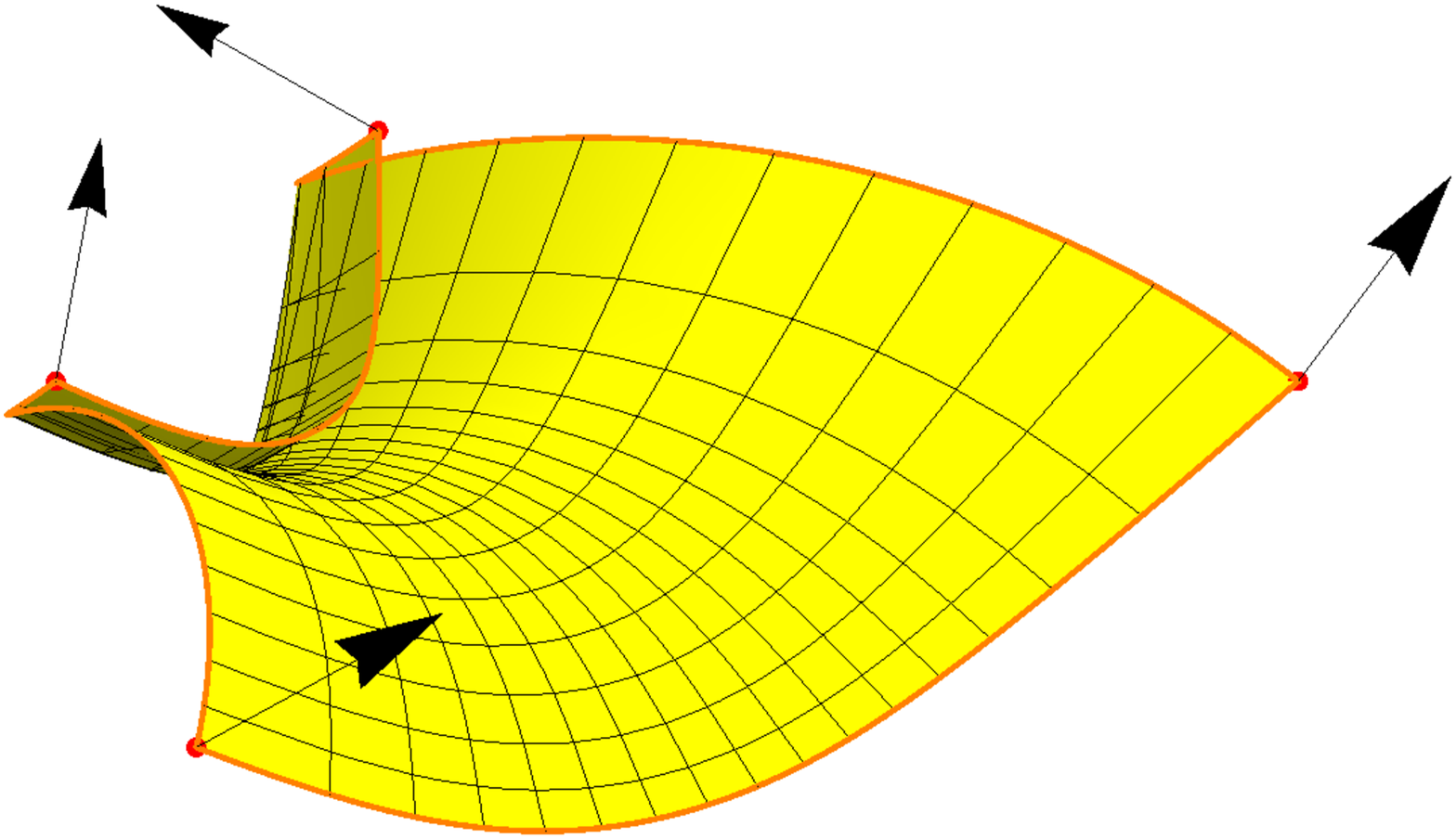}\qquad    \includegraphics[width=0.45\textwidth]{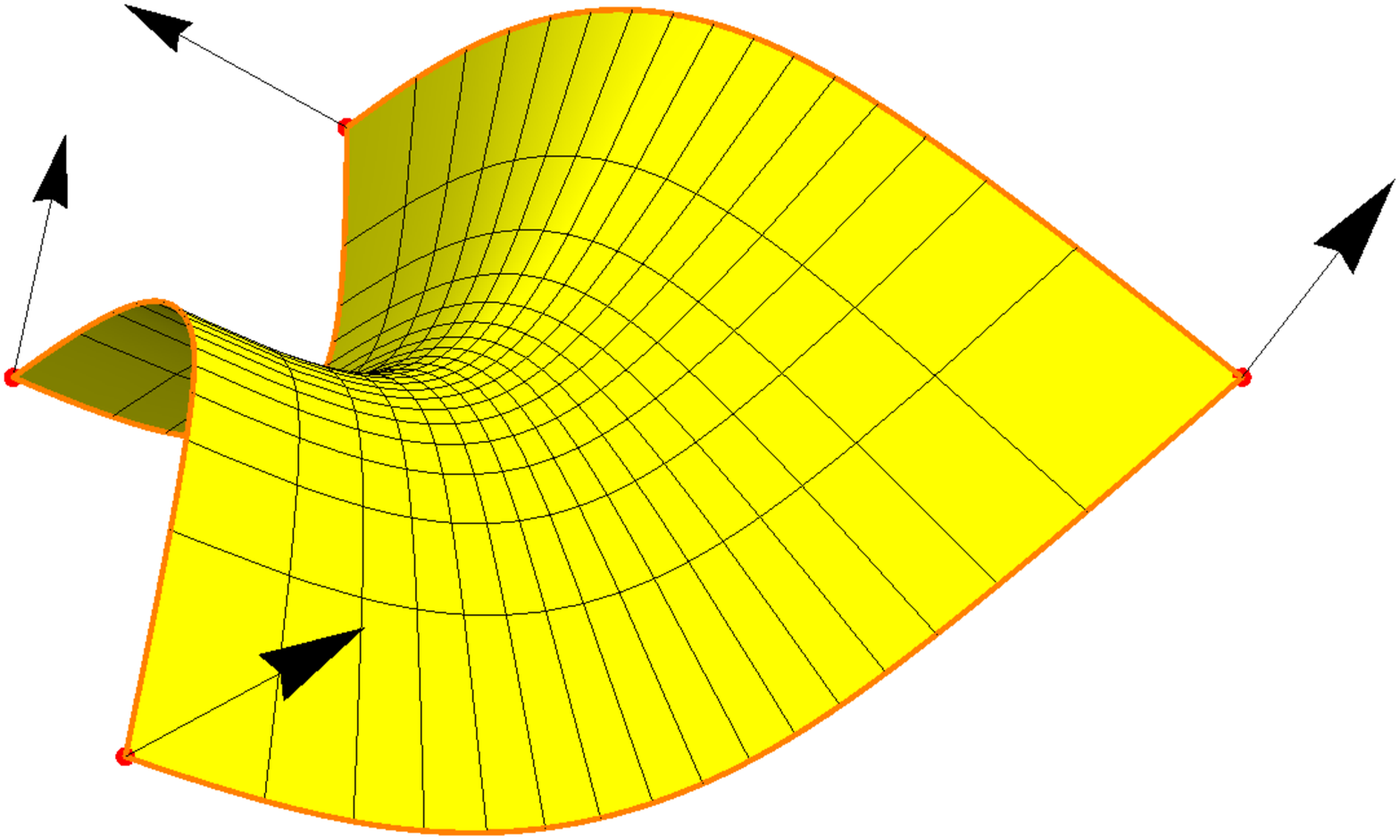}
\begin{minipage}{0.9\textwidth}
\caption{PN patches interpolating the same boundary data. A suitable choice of the tangent vectors leads to a smooth patch (right), while some choices may produce ridges (left).\label{Sridges}}
\end{minipage}
\end{center}
\end{figure}

\end{remark}

\begin{remark}
\rm
A natural question is why not to use the Coons (or some other boundary-curves) construction already in the primal space. A possible way could be for instance to prescribe some boundary curves satisfying given data, construct a patch given by this boundary and then to modify suitably the obtained patch (simultaneously preserving the conditions at the boundary) to gain a new patch which is PN. However, this construction assumes a necessary requirement that the prescribed curves must be PSN, i.e., curves on the surfaces along which the surface admits Pythagorean normals, cf. \cite{VrLa14a}. Using the dual approach and the isotropic model for this is considerably simpler.
\end{remark}

\section{Conclusion}\label{Concl}

The main goal of this paper was to present a simple functional algorithm for computing piecewise Hermite interpolation surfaces with rational offsets. The obtained PN surface interpolates a set of given points with associated normal directions. The isotropic model of the dual space was used for formulating the algorithm. This setup enables us to apply the standard bicubic Coons construction in the dual space for obtaining the interpolation PN surface in the primal space. The presented method is completely local and yields a surface with $G^1$~continuity. Moreover the method solves the PN interpolation problem directly, i.e., without the need for any subsequent reparameterization, which must be always followed by trimming of the parameter domain. Together with its simplicity, this is a main advantage of the designed technique. It can be used by designers anytime when surfaces with rational offsets are required for modelling purposes.

\section*{Acknowledgments}

The authors Miroslav L\'{a}vi\v{c}ka and Jan Vr\v{s}ek were supported by the project LO1506 of the Czech Ministry of Education, Youth and Sports.
We thank to all referees for their valuable comments, which helped us to improve the paper.

\end{document}